\title{\MakeUppercase{A Tight Bound on the Maximum Interference \newline
       of Random Sensors in the Highway Model}}
\author{
Evangelos Kranakis,\footnotemark[1]~\footnotemark[6]~\footnotemark[5]
\and
Danny Krizanc,\footnotemark[2]
\and
Pat Morin,\footnotemark[1]~\footnotemark[5]
\and
Lata Narayanan,\footnotemark[3]~\footnotemark[5] \\ and
\and
Ladislav Stacho\footnotemark[4]~\footnotemark[5]
}
\begin{document}

\maketitle

\footnotetext[1]{School of Computer Science, Carleton University,
Ottawa, ON, K1S 5B6, Canada.
}
\footnotetext[2]{Department of Mathematics and Computer Science,
Wesleyan University, Middletown CT 06459, USA.
}
\footnotetext[3]{Department of Computer Science and Software Engineering,
Concordia University,
Montreal, QC, H3G 1M8, Canada.
}
\footnotetext[4]{Department of Mathematics, Simon Fraser University,
Burnaby, BC,  V5A 1S6, Canada.
}
\footnotetext[5]{Research supported by NSERC (Natural Sciences and
Engineering Research Council of Canada) grant.
}
\footnotetext[6]{
Research supported by MITACS (Mathematics of Information Technology
and Complex Systems) grant.
}

\begin{abstract}
Consider $n$ sensors whose positions are 
represented by $n$ uniform, independent and identically
distributed random variables assuming values
in the open unit interval $(0,1)$. A natural way to guarantee connectivity
in the resulting sensor network is to assign to each sensor as its range, the maximum of the 
two possible distances to its two neighbors. 
The
interference at a given sensor
is defined as the number of sensors that have this sensor within their
range. In this paper we prove that 
the expected maximum interference
of the sensors is $\Theta (\sqrt{\ln n})$.
\end{abstract}


\section{Introduction}

The broadcast nature of wireless communication implies that interference with
other transmissions is inevitable. Interference can be caused by sources inside or outside the system and comes in many forms. Co-channel interference is caused by other wireless devices transmitting on the same frequency. Such 
interference can make it impossible for a receiver to decode a transmission
unless the signal power of the intended source is significantly higher than 
the combined strength of the signal received from the interfering sensors. Wireless 
devices are designed to admit a certain maximum level of interference. It is 
therefore crucial to understand the maximum possible interference that may be
experienced by any element in a 
wireless network. 

In this paper we study the expected maximum interference 
for $n$ sensors placed at random in the
{\em highway model}.
According to this model,
$n$ sensors are represented by $n$ uniform, independent and identically
distributed random variables
in the open unit interval $(0,1)$.

Since only nodes whose transmissions can reach a node can cause interference
at it, an important way to manage interference is by the use of topology control algorithms. In particular, one can assign transmission ranges to nodes with the objective of minimizing interference. On the other hand, the assignment of transmission ranges should also ensure that the network is connected. 
In the highway model, a natural algorithm is 
to assign as transmission range to a sensor the maximum distance
between its two immediate (from the left and right) neighbors since this
is the minimum range required to attain connectivity. 
We are interested in studying the resulting {\em interference} among the
$n$ sensors. 
Intuitively, the interference for each sensor $i$ is defined 
as the number of sensors that have $i$ within their
range. 



Several papers study interference and network performance degradation.
Gupta and Kumar \cite{gupta2000} 
considers the throughput of wireless networks under two models of
interference: one is a protocol model that assumes interference
to be an all-or-nothing phenomenon and the other a physical model
that considers the impact of interfering transmissions on
the signal-to-noise ratio.
Motivated by this, 
Jain \etal\ \cite{jain2005} defined the concept of
{\em conflict graph} (a graph indicating which groups of nodes
interfere and hence cannot be active at the same time) and study what is the maximum 
throughput that can be supported by a wireless network given a
specific placement of wireless nodes in physical space     
and a specific traffic workload.

Burkhart \etal\ \cite{burkhart2004} proposes connectivity preserving and 
spanner constructions which are interference optimal. 
\cite{moscibroda2005} considers the {\em average interference} problem while
maintaining connectivity.
Closely related to our study is 
the following problem first proposed in \cite{locher2008}:
\begin{quote}
Given $n$ nodes in the
plane. Connect the nodes by a spanning tree. For each node $v$ we construct a
disk centering at $v$ with radius equal to the distance to $v$'s furthest neighbor in
the spanning tree. The interference of a node $v$ is then defined as the number
of disks that include node $v$ (not counting the disk of $v$ itself). Find a spanning
tree that minimizes the maximum interference.
\end{quote}
Choosing transmission radii which minimize the
maximum interference while maintaining a connected symmetric 
communication graph 
is shown by \cite{buchin2008} to be
NP-complete.
In addition,
\cite{halldorsson2008} gives an algorithm which yields a
maximum interference in $O( \sqrt{n})$ for any set of $n$ sensors in the plane. 
For the case of points
on a line (i.e., the highway model)
\cite{vonrickenbach2005} shows that if nodes are distributed as an exponential node chain, the algorithm described above for assigning ranges to sensors has maximum interference $\Omega(n)$. They proceed to give an $n^{1/4}$-approximation algorithm for the problem of finding an assignment
of ranges that minimizes interference.

\cite{bilo2008} shows that for 
broadcasting (one-to-all), gossiping (all-to-all), and symmetric gossiping (symmetric all-to-all)
the problem of minimizing the maximum interference experienced by
any node in the network is hard to approximate within better than
a logarithmic factor, unless NP admits slightly superpolynomial 
time algorithms. They also prove that any approximation algorithm for the problem of minimizing the total transmission power assigned to the nodes in order to guarantee any of the above communication patterns, can be transformed, by maintaining the same performance ratio, into an approximation algorithm for the problem of minimizing the total interference experienced by all the nodes in the network.

Here we study a model where sensors are represented by $n$ uniform, independent 
and identically distributed random variables in the open unit interval $(0,1)$. We 
assign to each sensor as range the maximum of the 
two possible distances with its two neighbors.
For this case,
we show
a tight bound on the expected maximum interference experienced by any sensor.
In particular,
Theorem~\ref{plm}
shows that the expected maximum interference is $\Theta (\sqrt{\ln n)}$, 
with high probability. This is in contrast to the result of \cite{vonrickenbach2005}
that the maximum interference for $n$ sensors distributed on a line and
connected in the same manner is $\Omega(n)$ in the worst case.

\section{Expected Maximum Interference}
\label{section3}



Let $S=\{x_1,\ldots,x_n\}$ be a set of values chosen
independently and uniformly at random from the real interval $[0,1]$
and reordered so that $x_1<\cdots<x_n$.  For each
$i\in\{2,\ldots,n-1\}$, define the \emph{broadcast range} 
\[
  R_i = \max\{x_i - x_{i-1}, x_{i+1}-x_i\}
\]
and the \emph{broadcast interval}
\[
  I_i = [x_i-R_i,x_i+R_i] \enspace . \]
For $i=1$ ($i=n$) define $R_1 = x_2 - x_1$ ($R_n = x_n - x_{n-1}$,
  respectively) and $I_1 = [ x_1 - R_1 , x_1 + R_1 ]$ ($I_n = [ x_n - R_n, x_n + R_n]$, respectively).

The \emph{interference at $x_i$} is then given by
\[
  Z_i = |\{j\in\{1,\ldots,n\}\setminus\{i\} : x_i \in I_j\}| \enspace .
\]
The \emph{maximum interference} in $S$, is given by
\[
  Z_S=\max\{Z_i:i\in\{1,\ldots,n\}\} \enspace .
\]


In this section we prove our main result:
\begin{thm}
\label{plm}
With probability $1-o(1)$, the  maximum inteference
$Z_S\in \Theta(\sqrt{\log n})$.
\end{thm}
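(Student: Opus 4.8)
The plan is to decompose the interference at each point into contributions from the left and the right, analyze one side \emph{exactly}, and then combine a union bound (for the upper bound on $Z_S$) with an independent-blocks argument (for the lower bound). Write $g_k = x_{k+1}-x_k$ for the gap between consecutive points, let $Z_i^R$ count the coverers $x_j$ with $j>i$ and $x_i \in I_j$, and define $Z_i^L$ symmetrically, so that $Z_i = Z_i^L + Z_i^R$. The first observation is that the right neighbour $x_{i+1}$ always covers $x_i$, whereas for $m \ge 2$ the point $x_{i+m}$ covers $x_i$ exactly when $g_{i+m} \ge \sum_{k=i}^{i+m-1} g_k$: indeed $R_{i+m}=\max\{g_{i+m-1},g_{i+m}\}$ and the distance $x_{i+m}-x_i$ is a sum of $m\ge 2$ gaps, hence already exceeds $g_{i+m-1}$, forcing the maximum to be attained by $g_{i+m}$. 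Thus $Z_i^R = 1 + \sum_{m\ge 2}\mathbf 1[A_m]$ with $A_m = \{g_{i+m}\ge \sum_{k=i}^{i+m-1}g_k\}$.

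Next I would pass to the exponential representation of spacings: the gaps are distributed as $g_k = E_k/\Sigma$, where the $E_k$ are i.i.d.\ $\mathrm{Exp}(1)$ and $\Sigma=\sum_k E_k$. The key point is that each event $A_m$ is \emph{scale invariant}, so the common normaliser $\Sigma$ cancels and $A_m=\{E_{i+m}\ge E_i+\cdots+E_{i+m-1}\}$ depends only on the independent variables $E_i,\dots,E_{i+m}$. Writing $T_m=E_i+\cdots+E_{i+m}$, I would recast $A_m$ as $\{T_m\ge 2T_{m-1}\}=\{T_{m-1}/T_m\le 1/2\}$ and invoke the classical fact that the normalised partial sums $T_{m-1}/T_m$ are independent with $T_{m-1}/T_m\sim\mathrm{Beta}(m,1)$. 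This yields the crucial structural statement that the events $A_m$ ($m\ge 2$) are \emph{mutually independent} with $\Pr[A_m]=2^{-m}$, so $Z_i^R-1$ is a sum of independent Bernoulli variables with geometrically decaying means.

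From here the tail of $Z_i^R$ is pinned down on both sides. For the upper tail I would bound $\Pr[\sum_{m\ge2}\mathbf 1[A_m]\ge t]$ by the $t$-th elementary symmetric polynomial $e_t(2^{-2},2^{-3},\dots)$ (using $\mathbf 1[\sum X_m\ge t]\le\binom{\sum X_m}{t}$ and taking expectations); its dominant term comes from the cheapest indices $2,3,\dots,t+1$ and equals $2^{-(t^2+3t)/2}$, while the remaining terms contribute only a bounded factor (a partition generating function evaluated at $1/2$), giving $\Pr[Z_i^R\ge t]\le C\,2^{-\Theta(t^2)}$. For the lower tail the single configuration $A_2\cap\cdots\cap A_{t+1}$ already gives $\Pr[Z_i^R\ge t+1]\ge 2^{-(t^2+3t)/2}=2^{-\Theta(t^2)}$. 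The upper bound on $Z_S$ then follows by a union bound over all $n$ points (boundary points are stochastically dominated by the unrestricted model, since they have fewer potential coverers): choosing $t=C\sqrt{\log n}$ with $C$ large makes $2n\cdot 2^{-\Theta(t^2)}=o(1)$, so $Z_S=O(\sqrt{\log n})$ with probability $1-o(1)$.

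For the lower bound on $Z_S$ I would exploit that, after the cancellation of $\Sigma$, each $A_m$ depends only on $E_i,\dots,E_{i+m}$. Partitioning the index set into $\Theta(n/\sqrt{\log n})$ disjoint blocks of $\Theta(\sqrt{\log n})$ consecutive points, the success events ``the leftmost point of the block has $A_2,\dots,A_{t+1}$ all holding'', each of probability $2^{-(t^2+3t)/2}$, depend on disjoint sets of the $E_k$ and are therefore \emph{genuinely independent}. Taking $t=c\sqrt{\log n}$ with $c$ small makes the expected number of successful blocks $\tfrac{n}{\Theta(\sqrt{\log n})}\,n^{-\Theta(c^2)}\to\infty$, so with probability $1-o(1)$ some block succeeds and $Z_S=\Omega(\sqrt{\log n})$. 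I expect the main obstacle to be Steps~2--3: establishing the \emph{exact} independence of the events $A_m$ (not merely asymptotic independence) and justifying the scale-invariance reduction carefully enough that the blocks in the lower bound are exactly independent. Once the Beta-independence of the normalised partial sums is in hand, the two tail estimates and the final probabilistic bookkeeping are routine.
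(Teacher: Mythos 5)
Your proposal is correct, and it takes a genuinely different route from the paper's. The paper proves the two bounds by separate, looser arguments: for the lower bound it constructs a ``$k$-frame'' (a run of gaps, each between a quarter and a half of its predecessor, so that the point at the right end is covered by all $k$ points to its left), shows a frame occurs among $k+1$ consecutive exponentials with probability at least $2^{-(k+2)^2}$, and uses independence of $\lfloor n/k\rfloor$ disjoint blocks; for the upper bound it splits interference into left/right and then into ``short-range'' coverers (preceding gap at most $1$), controlled by a round-based halving process whose survival probability after $k$ rounds is $2^{-\Omega(k^2)}$, and ``long-range'' coverers, bounded separately because the relevant gaps must double and no exponential exceeds $d\ln n$ with high probability. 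You instead characterize the right-coverers of $x_i$ exactly --- $x_{i+m}$ covers $x_i$ iff $T_m\ge 2T_{m-1}$ --- and invoke the classical R\'enyi/Sukhatme fact that the ratios $T_{m-1}/T_m\sim\mathrm{Beta}(m,1)$ are mutually independent (together with scale-invariance of these events, which you correctly flag as the step needing care and which does go through, since scale-invariant events have the same joint law under the gaps as under the raw exponentials) to conclude that $Z_i^R-1$ is exactly a sum of independent $\mathrm{Bernoulli}(2^{-m})$ variables. That single structural fact delivers both tails of $Z_i^R$ at once, removes the need for the paper's short-/long-range case split (the elementary symmetric function bound absorbs arbitrarily distant coverers automatically), and gives a sharper exponent ($2^{-(t^2+3t)/2}$ versus $2^{-(k+2)^2}$), which would matter if one wanted the constant in front of $\sqrt{\log n}$. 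Your lower-bound configuration (each gap beyond the victim at least the sum of all previous ones) is essentially the mirror image of the paper's frame, and the left-sided analysis does reduce to the same Beta computation after index reversal, even though the covering gap $g_{i-m-1}$ lies just outside the spanned interval. The price of your route is reliance on the exact independence of normalized partial sums of exponentials, which the paper's more elementary (if cruder) estimates avoid.
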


This result is an immediate consequence of 
Lemmas~\ref{lower-bound}~and~\ref{upper-bound}. 
Throughout this section, we will make use of the relationship between
uniformly distributed point sets and exponential random variables
\cite{d86}[Chapter~V, Theorem~2.2].  Suppose $S$ is a set of $n$ points
independently and uniformly distributed in $[0,1]$ whose elements are
$x_1,\ldots,x_n$ in sorted order.  Let $X_0,\ldots,X_n$ be Exponential(1)
random variables, let 
$x'_i=\sum_{j=0}^{i-1}X_j,$
and let $x_i''=x'_i/x'_{n+1}$.
Then $x_1'',\ldots,x_n''$ have the same distribution as $x_1,\ldots,x_n$.

Because of the above relationship we will, throughout this section, use
the convention that $X_0,\ldots,X_n$ are Exponential(1) random variables,
$x_i=\sum_{j=0}^{i-1}X_j$, and $S=\{x_1,\ldots,x_n\}$.  This definition
of $S$, $x_1,\ldots,x_n$, and $X_0,\ldots,X_n$ will be implicit in the
statements of all subsequent results and in all proofs.

\subsection{The Lower Bound}

We prove our lower-bound by defining a configuration of points that
leads to an element with interference $\Omega(\sqrt{\log n})$ and then
showing that, with high probability, this configuration occurs somewhere
in our point set.

A sequence of numbers $X_0,\ldots,X_k$ forms a \emph{$k$-frame} if
\[
     1 \le X_0 \le 2
\]
and
\[
     X_{i-1}/4 \le X_i \le X_{i-1}/2 \enspace ,
\]
for all $i\in\{1,\ldots,k\}$.  Notice that, if $X_0,\ldots,X_k$ form a
$k$-frame, then $x_{k+1}$ is a node that has interference at least $k$.
The next lemma shows that this situation is not too unlikely:

\begin{lem}\label{frame}
If $X_0,\ldots,X_k$ are a sequence of independent Exponential(1) random
variables, then the probability that $X_0,\ldots,X_k$ form a $k$-frame
is at least 
$2^{-(k+2)^2}$.
\end{lem}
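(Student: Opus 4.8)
The plan is to bound the probability that $X_0,\ldots,X_k$ form a $k$-frame from below by treating the two defining conditions separately and exploiting independence. Since the $X_i$ are independent, I would write the target probability as a product of conditional probabilities:
\[
  \Pr[\text{$k$-frame}] = \Pr[1\le X_0\le 2]\cdot\prod_{i=1}^{k}\Pr[X_{i-1}/4\le X_i\le X_{i-1}/2 \mid X_{i-1}] \enspace .
\]
For the first factor, $\Pr[1\le X_0\le 2]=e^{-1}-e^{-2}$, a fixed positive constant. The real work is in lower-bounding each factor in the product uniformly over the relevant range of $X_{i-1}$.

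First I would observe that, conditioned on the frame constraints holding up to index $i-1$, the value $X_{i-1}$ is at most $X_0/2^{i-1}\le 2/2^{i-1}=2^{2-i}$, and at least $X_0/4^{i-1}\ge 4^{-(i-1)}=2^{-2(i-1)}$. What matters for the $i$-th factor is an upper bound on $X_{i-1}$, since for an Exponential(1) variable the probability of landing in the interval $[X_{i-1}/4,X_{i-1}/2]$ is
\[
  e^{-X_{i-1}/4}-e^{-X_{i-1}/2}\ge \tfrac{X_{i-1}}{4}\,e^{-X_{i-1}/2} \enspace ,
\]
using $e^{-a}-e^{-b}\ge(b-a)e^{-b}$ for $0\le a\le b$. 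I would then lower-bound this using $X_{i-1}\ge 2^{-2(i-1)}$ together with $e^{-X_{i-1}/2}\ge e^{-1}$ (since $X_{i-1}\le 2$ throughout a frame), obtaining a per-step lower bound roughly of order $2^{-2(i-1)}$ up to constants.

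Multiplying these per-step bounds, the dominant contribution is $\prod_{i=1}^{k}2^{-2(i-1)}=2^{-2\binom{k}{2}}=2^{-k(k-1)}$, with the constant factors ($e^{-1}$ per step and the initial $e^{-1}-e^{-2}$) contributing lower-order terms in the exponent. The goal is to absorb all of these into the clean bound $2^{-(k+2)^2}$; since $(k+2)^2=k^2+4k+4$ comfortably exceeds $k(k-1)=k^2-k$ plus the linear slack coming from the constants, the stated inequality should follow by a routine comparison of exponents. The main obstacle is bookkeeping rather than conceptual: I must ensure the conditioning is handled cleanly (each factor bounded uniformly in the conditioning value) and that the accumulated constants are charged against the generous padding in $(k+2)^2$ rather than left unaccounted for. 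Care is needed to use the correct extreme of the range for $X_{i-1}$—the \emph{lower} bound $X_{i-1}\ge 4^{-(i-1)}$ drives the interval-width estimate, while the \emph{upper} bound $X_{i-1}\le 2$ controls the exponential factor—so I would track both invariants inductively along the frame.
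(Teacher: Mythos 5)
Your proposal is correct and follows essentially the same route as the paper's proof: condition step by step, use the invariant $4^{-(i-1)}\le X_{i-1}\le 2^{2-i}$ implied by the frame constraints to bound each conditional factor uniformly, and then compare the accumulated exponent against the slack in $(k+2)^2=k^2+4k+4$. The only difference is cosmetic: you estimate each factor via $e^{-a}-e^{-b}\ge(b-a)e^{-b}$, whereas the paper substitutes the worst-case value of $X_{i-1}$ directly and bounds $\exp(-2^{-(2i+2)})-\exp(-2^{-(2i+1)})\ge 2^{-(2i+3)}$; both yield per-step bounds of order $2^{-2i}$ and the same final inequality.
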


\begin{proof}
Recall that an Exponential(1) random variable $X$ has cumulative
distribution function
\[
   \Pr\{X \le x\} = 1-e^{-x} \enspace .
\]
Next, observe that, in a frame,
\[
                 4^{-i} \le X_i \le 2^{-i}  \enspace ,
\]
for all $i\in\{0,\ldots,k\}$.  Let $F(X)$ be the event ``$X$ is a frame.''
Then,
\begin{eqnarray*}
     \Pr\{F(X_0,\ldots,X_{i+1}) \mid F(X_0,\ldots,X_{i})\} 
        & = & \Pr\{X_{i+1} \in [X_{i}/4,X_{i}/2] \mid F(X_0,\ldots,X_{i})\} \\
        & \ge & \Pr\{X_{i+1} \in [4^{-i}/4,4^{-i}/2]\} \\
        & = & \Pr\{X_{i+1} \in [2^{-(2i+2)},2^{-(2i+1)}]\} \\
        & = & \exp(-2^{-(2i+2)}) - \exp(2^{-(2i+1)}) \\
        & \ge & 2^{-(2i+3)} \enspace ,
\end{eqnarray*}
where the last inequality holds for all $i\ge 0$.  Therefore,
\begin{eqnarray*}
     \Pr\{F(X_0,\ldots,X_{k})\}
   & = & \Pr\{X_0\in[1,2]\}
         \cdot\prod_{i=1}^k \Pr\{F(X_0,\ldots,X_{i})
                                 \mid F(X_0,\ldots,X_{i-1})\} \\
   & = & e^{-1}(1-e^{-1})
         \cdot\prod_{i=1}^k \Pr\{F(X_0,\ldots,X_{i})
                                 \mid F(X_0,\ldots,X_{i-1})\} \\
   & = & e^{-1}(1-e^{-1})
         \cdot\prod_{i=1}^k \Pr\{X_{i} \in [X_{i-1}/4,X_{i-1}/2]\}
                                 \mid F(X_0,\ldots,X_{i-1})\} \\
   & \ge &  e^{-1}(1-e^{-1})
         \cdot\prod_{i=1}^k \Pr\{X_{i} \in [4^{-i}/4,4^{-i}/2]\} \\
   & \ge & e^{-1}(1-e^{-1})\cdot\prod_{i=1}^k 2^{-(2i+1)} \\
   & = & e^{-1}(1-e^{-1})\cdot2^{-\sum_{i=1}^k(2i+1)} \\
   & = & e^{-1}(1-e^{-1})\cdot2^{-(k^2+2k)} \\
   & \ge  & 2^{-(k+2)^2} \\
\end{eqnarray*}
as required.
\end{proof}

\begin{lem}[Lower Bound]\label{lower-bound}
With probability at least $1-\exp(-n^{1-c}/\sqrt{c\log n})$, there exists
some element of $S$ that has interference at least $\lfloor\sqrt{c\log
n}\rfloor-2$.
\end{lem}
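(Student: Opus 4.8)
The plan is to manufacture many disjoint, and hence independent, opportunities for a frame to occur, and then to argue that with the stated probability at least one of them succeeds. Set $k=\lfloor\sqrt{c\log n}\rfloor-2$, so that by the observation preceding Lemma~\ref{frame} a $k$-frame yields a node of interference at least $\lfloor\sqrt{c\log n}\rfloor-2$, exactly the bound we want. The reason for working with the exponential representation is that the gaps $X_0,\dots,X_n$ are mutually independent, so any two blocks of consecutive $X_i$'s that do not overlap give independent events, each with the same distribution as $X_0,\dots,X_k$.

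First I would partition the index set $\{0,1,\dots,n\}$ into $m=\lfloor(n+1)/(k+1)\rfloor$ consecutive blocks of length $k+1$ (discarding the few leftover indices at the end). For $t=0,\dots,m-1$, let $E_t$ be the event that the $t$-th block $X_{t(k+1)},\dots,X_{t(k+1)+k}$ forms a $k$-frame. Because the blocks are disjoint and the $X_i$ are independent, the events $E_0,\dots,E_{m-1}$ are mutually independent, and by Lemma~\ref{frame} each satisfies $\Pr\{E_t\}\ge 2^{-(k+2)^2}$. Since $k+2=\lfloor\sqrt{c\log n}\rfloor\le\sqrt{c\log n}$, we have $(k+2)^2\le c\log n$, and therefore, with logarithms to base $2$, $\Pr\{E_t\}\ge 2^{-c\log n}=n^{-c}$.

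Next I would bound the probability that no block is a frame. Using independence together with $1-x\le e^{-x}$,
\[
  \Pr\Bigl\{\textstyle\bigcap_{t=0}^{m-1}\overline{E_t}\Bigr\}
     =\prod_{t=0}^{m-1}\bigl(1-\Pr\{E_t\}\bigr)
     \le (1-n^{-c})^{m}
     \le \exp\bigl(-m\,n^{-c}\bigr) .
\]
It then remains to check that $m\,n^{-c}\ge n^{1-c}/\sqrt{c\log n}$, i.e.\ that $m\ge n/\sqrt{c\log n}$. This follows from $k+1\le\sqrt{c\log n}$ and $m=\lfloor(n+1)/(k+1)\rfloor$, since $(n+1)/(k+1)\ge n/\sqrt{c\log n}$ for large $n$ and the floor costs only a lower-order term. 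Whenever some $E_t$ occurs the associated node has interference at least $k=\lfloor\sqrt{c\log n}\rfloor-2$, which gives the lemma.

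The routine part is the bookkeeping in the last step: the floor in $m$, and the tiny boundary loss from discarding leftover indices (and from insisting that the node produced by the last block be a genuine interior element of $S$), affect only lower-order terms, so I would dispatch them with a crude estimate valid for all large $n$. The one genuine subtlety to get right is the base of the logarithm: the bound $\Pr\{E_t\}\ge n^{-c}$ requires $2^{-(k+2)^2}\ge n^{-c}$, which forces $\log$ to mean $\log_2$, and this is precisely what makes the two competing factors $m\approx n/\sqrt{c\log n}$ and $2^{-(k+2)^2}\approx n^{-c}$ multiply into the advertised exponent $n^{1-c}/\sqrt{c\log n}$.
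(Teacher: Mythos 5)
Your proposal is correct and follows essentially the same route as the paper: partition the independent exponential gaps into disjoint blocks of length about $k+1$, apply Lemma~\ref{frame} to each block to get success probability $n^{-c}$, and use independence plus $(1-x)^m\le e^{-xm}$ to bound the failure probability. Your version is in fact slightly more careful than the paper's (your blocks are genuinely disjoint, whereas the paper's blocks $X_{jk},\ldots,X_{jk+k}$ share endpoints, and you make the base-$2$ logarithm explicit), but the argument is the same.
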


\begin{proof}
Let $k=\lfloor \sqrt{c\log n} \rfloor-2$.  By Lemma~\ref{frame},
$X_{jk},\ldots,X_{jk+k}$ have probability at least $2^{-(k+2)^2} =
n^{-c}$ of forming a $k$-frame, in which case $x_{jk+k}$ has
interference at least $k$.  Since this is true, independently, for any
$j\in\{0,\ldots,\lfloor n/k\rfloor\}$, the probability that there is no
element of $S$ with interference greater than $k$ is at most
\[
   (1-n^{-c})^{\lfloor n/k\rfloor} \le \exp(-\lfloor n^{1-c}/k\rfloor) \enspace ,
\]
as required.
\end{proof}

\subsection{The Upper Bound}

We begin our upper-bound proof by studying a variant of interference that
is 1-sided and that considers only interference generated by transmitters
that are nearby.  The \emph{left-interference} of an element $x_t\in
S$ is the number of elements $x_i\in S$ such that $x_i < x_t$ and
$x_t-x_i \le \max\{x_i-x_{i-1},x_{i+1}-x_i\}$.  The \emph{short-range
left-interference} of $x_t$ is defined in the same way, except only
counting those elements $x_i$ such that $X_{i-1} \le 1$. (Note that this
implies $x_t-x_i \le 1$.)

\begin{lem}
\label{squeaker}
The maximum short-range left-interference of any element in $S$ is at
most $\sqrt{c\log n}$ with probability at least $1-n^{-\Omega(c)}$.
\end{lem}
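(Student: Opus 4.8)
The plan is to fix a target element $x_t$ and analyze its short-range left-interference directly in terms of the gaps immediately to its left. Writing $W_m = X_{t-m}$ for $m\ge 1$ — these are again independent Exponential(1) variables — and $S_m = W_1+\cdots+W_m = x_t - x_{t-m}$, I would first reformulate the interference condition. For $m\ge 2$ the element $x_{t-m}$ contributes to the short-range left-interference of $x_t$ exactly when $S_m \le \max\{W_m,W_{m+1}\}$ and $W_{m+1}\le 1$; since $S_m > W_m$ almost surely for $m\ge 2$, the maximum must be attained by $W_{m+1}$, and the condition collapses to the clean two-sided form
\[
   S_m \le W_{m+1} \le 1 \enspace .
\]
The immediate neighbour $x_{t-1}$ always interferes (subject only to $W_2\le 1$) and so contributes at most $1$ to the count.

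The core of the argument is a doubling observation. Suppose the short-range left-interference of $x_t$ is at least $k$; then at least $k-1$ positions $m_1<\cdots<m_{k-1}$, all at least $2$, satisfy the displayed condition. For two consecutive interfering positions the constraint $W_{m_{l-1}+1}\ge S_{m_{l-1}}$ gives
\[
   S_{m_l} \ge S_{m_{l-1}} + W_{m_{l-1}+1} \ge 2\,S_{m_{l-1}} \enspace ,
\]
so the partial sums at least double between successive interfering elements. Combined with $S_{m_{k-1}}\le 1$ (forced by $W_{m_{k-1}+1}\le 1$), this yields $S_{m_l}\le 2^{-(k-1-l)}$, and hence, using $m_l\ge l$ and $W_l\le S_l\le S_{m_l}$, the purely deterministic consequence
\[
   W_l \le 2^{-(k-1-l)} \quad\text{for all } l\in\{1,\ldots,k-1\} \enspace .
\]
The essential point is that this transfers a constraint depending on the \emph{random} interfering positions $m_l$ into one on the \emph{fixed} first $k-1$ gaps.

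Because $W_1,\ldots,W_{k-1}$ are independent and $\Pr\{W_l\le 2^{-(k-1-l)}\}\le 2^{-(k-1-l)}$, the probability that $x_t$ has short-range left-interference at least $k$ is at most
\[
   \prod_{l=1}^{k-1} 2^{-(k-1-l)} = 2^{-\binom{k-1}{2}} = 2^{-\Theta(k^2)} \enspace .
\]
Taking $k=\lceil\sqrt{c\log n}\,\rceil$ and applying a union bound over all $n$ choices of $t$ (elements with $t\le k$ trivially have fewer than $k$ elements to their left) gives a total failure probability of $n\cdot 2^{-\Theta(c\log n)} = n^{-\Omega(c)}$, as required.

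I expect the main obstacle to be making the doubling-and-transfer step fully rigorous. One must verify the indexing carefully so that the automatically-interfering neighbour $x_{t-1}$ is correctly excluded from the $k-1$ genuine constraints, confirm that the reindexed positions satisfy $m_l\ge l$, and check the bound $2^{-(k-1-l)}$ at the endpoints (for $l=k-1$ it reads $W_{k-1}\le 1$), before the clean product bound over independent exponentials can be invoked. Once the deterministic implication is established, the remaining probability estimate is routine.
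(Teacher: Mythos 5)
Your proposal is correct, and it reaches the same $2^{-\Theta(k^2)}$ bound as the paper via the same underlying phenomenon --- the distance from $x_t$ to successive interfering elements must at least double while staying below $1$ --- but the way you extract the probability estimate is genuinely different. The paper sets up a sequential round-based process scanning away from a candidate point $x = x_i + X_{i-1}$, in which round $j$ carries a residual length $\ell_j \le 2^{-(j-1)}$ and the conditional probability of surviving into round $j+1$ is bounded by $\Pr\{X \le \ell_j\} \le 2^{-j+1}$; multiplying over rounds and union-bounding over the $n$ candidate points gives the result. You instead convert the event that the short-range left-interference of $x_t$ is at least $k$ into the deterministic consequence $W_l \le 2^{-(k-1-l)}$ for the \emph{fixed} first $k-1$ gaps $W_1,\ldots,W_{k-1}$, and then multiply unconditional probabilities of independent events. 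This buys a cleaner calculation: there is no conditioning, no need to argue that interfering elements correspond to round boundaries of a stopping process, and the exponent $\binom{k-1}{2}$ comes out exactly. (Incidentally, the paper's own exponent arithmetic has a small slip --- $\prod_{i=1}^{k-1} 2^{-i+1} = 2^{-(k-1)(k-2)/2}$, not $2^{-(k+2)(k-1)/2}$ --- which your route avoids; the slip is harmless since $2^{-(k-1)(k-2)/2}$ still yields $n^{-c/2+o(1)}$ after the union bound.) The only thing the paper's formulation offers that yours does not is the stronger claim for arbitrary points $x \in \mathbb{R}$ rather than just elements of $S$, but only the statement for elements of $S$ is used later in Lemma~\ref{upper-bound}, so nothing is lost. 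The caveats you flag at the end (excluding $x_{t-1}$ from the $k-1$ genuine constraints, verifying $m_l \ge l$, the trivial endpoint bound $W_{k-1} \le 1$) are exactly the right points to check, and each of them does go through.
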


\begin{proof}
We will actually prove something stronger, namely that the short-range
left-interference of any point $x \in \mathbb{R}$ is at most $\sqrt{c\log n}$
with probability at least $1-n^{-\Omega(c)}$.  We first observe that the
maximum value of the short range interference occurs when $x$ is of the
form $x_{i} + X_{i-1}$, for some $i\in\{1,\ldots,n\}$ where $X_{i-1}
\le 1$.

Consider the following process, that begins with $X_0$ and
upper-bounds the short-range left-interference at $x=x_1+X_0=2X_0$
(see Figure~\ref{fig:upper-bound}). If $X_0>1$, the process immediately ends.
Otherwise, the process proceeds in rounds where, in round $i$, there
is a length $\ell_i$.  Initially $\ell_i=X_0$.  During round~$i$, we
generate $X_{r_{i-1}+1},\ldots,X_{r_i}$ until $\sum_{j=1}^r X_{r_{i-1}+j}
\ge \ell_i/2$.  If $\sum_{j=1}^r X_j \ge \ell_i$, then the process ends.
Otherwise, we set $\ell_{i+1} = \ell_i-\sum_{j=1}^{r_i} X_j$ and continue
onto round $i+1$.

\begin{figure}
  \begin{center}
  \includegraphics{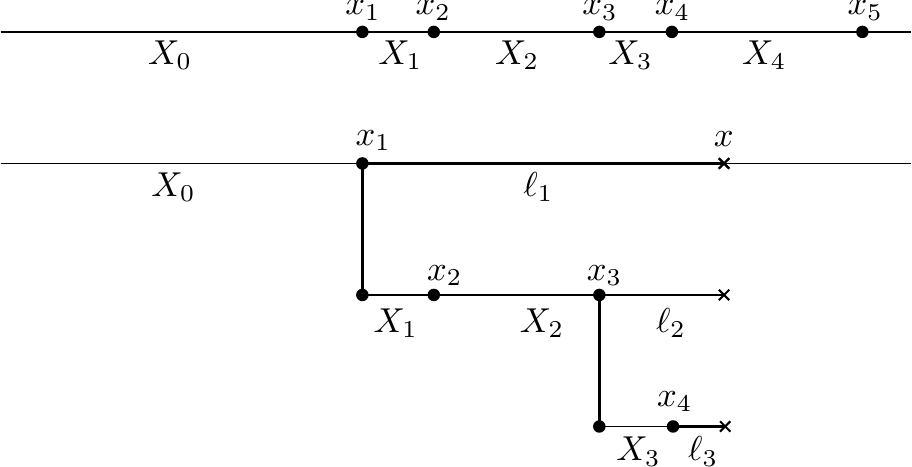}
  \end{center}
  \caption{A process that leads to an interference of 3 at $x$.  The process ends because $X_4 > \ell_3$.}
  \label{fig:upper-bound}
\end{figure}

Notice that, in this process, the only elements that might contribute
to the short-range left-interference at $x$ are $x_1$ and those $x_i$
where $X_{i-1}$ completes a round other than the final round.  Thus,
if the above process terminates during round $k$, then the short-range
left-interference at $x$ is at most $k$.

Now, observe that in round $i$, $\ell_i \le 1/2^{i-1}$.  Therefore,
the probability of continuing to round $i+1$ from round $i$ is at most
\[
   \Pr\{X_{r_{i-1}+1} \le 1/2^{i-1}\} = 1-e^{-2^{-i+1}} \le 2^{-i+1} \enspace .
\]
Therefore, the probability of continuing up to round $k$ is at most
\[
  \prod_{i=1}^{k-1} 2^{-i+1}
  = 2^{-\sum_{i=1}^{k-1}(i+1)}
  = 2^{-(k+2)(k-1)/2} \le 2^{-k^2/2} \enspace ,
\]
for $k\ge 2$.  Taking $k=\sqrt{c\log n}$, we find that this probability
is at most $1/n^{c/2}$.  Therefore, the probability that there
is \emph{any} point $x \in \mathbb{R}$ with short-range left-interference greater
than $\sqrt{c\log n}$ is at most $1/n^{c/2-1}$, as required.
\end{proof}

Finally, we have all the pieces needed to complete the upper bound:

\begin{lem}[Upper Bound]\label{upper-bound}
With probability at least $1-n^{-\Omega(c)}$, the maximum interference of
any element in $S$ is at most $\sqrt{c\log n}$.
\end{lem}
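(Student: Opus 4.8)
The plan is to write the interference at an arbitrary sensor $x_t$ as the sum of the interference coming from transmitters on its left and on its right, and to split each of these into a short-range and a long-range part. Since $x_t\in I_j$ for $x_j<x_t$ holds exactly when $x_t-x_j\le R_j$, the left part is precisely the left-interference of $x_t$, and by the mirror-image argument the right part is its right-interference; thus it suffices to bound each one-sided interference. Following the definition already in place, I split the left-interference of $x_t$ into its short-range part (transmitters $x_i$ with $X_{i-1}\le 1$) and the remaining long-range part (those with $X_{i-1}>1$). Lemma~\ref{squeaker} bounds the short-range left-interference of every point by $\sqrt{c\log n}$ with probability $1-n^{-\Omega(c)}$, and since the gaps $X_0,\dots,X_n$ are i.i.d.\ the reversed point sequence has the same law, so the identical bound applies to the short-range right-interference. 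The real content is therefore to control the long-range interference.

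For a fixed $t$ I would read the gaps leftward from $x_t$ and write $S_m=x_t-x_{t-m}=\sum_{p=1}^{m}X_{t-p}$ for the distance from $x_t$ to the $m$-th sensor on its left. A sensor $x_{t-m}$ contributes to the long-range left-interference of $x_t$ essentially when its left gap satisfies $X_{t-m-1}\ge S_m$ and $X_{t-m-1}>1$; the only other possibility, that $x_{t-m}$ reaches $x_t$ through its right gap alone, forces $x_{t-m}$ to be the immediate left neighbour of $x_t$ and so adds at most one. The decisive observation is that each such interferer at least doubles the running distance, $S_{m+1}=S_m+X_{t-m-1}\ge 2S_m$; consequently, if $k$ long-range interferers occur then $S$ exceeds $2^{k-2}$ and, in particular, the $k$-th of them is witnessed by a gap of size at least $2^{k-2}$.

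This doubling makes a large long-range interference a doubly-exponentially rare event, which is what I would exploit to beat the union bound over all $n$ sensors. I would estimate $\Pr\{\text{long-range left-interference of }x_t\ge k\}\le\sum_{m\ge 1}\mathbb{E}\!\left[\mathbf{1}(S_m\ge 2^{k-2})\,e^{-S_m}\right]$, and, using $e^{-S_m}\le e^{-2^{k-3}}e^{-S_m/2}$ on that event together with $\mathbb{E}[e^{-S_m/2}]=(2/3)^m$, bound the sum by $O(e^{-2^{k-3}})$. Taking $k$ of order $\log\log n$ already drives this below $n^{-c}$, so a union bound over the $n$ sensors, and over the symmetric right side, shows that with probability $1-n^{-\Omega(c)}$ every long-range interference is $O(\log\log n)=o(\sqrt{\log n})$. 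Adding the four contributions gives $Z_t\le 2\sqrt{c\log n}+o(\sqrt{\log n})$ simultaneously for all $t$, which is at most $\sqrt{c\log n}$ after absorbing the constant factor into $c$.

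The main obstacle is exactly this long-range part, because Lemma~\ref{squeaker} says nothing about a distant sensor whose defining gap exceeds $1$ and whose range might reach $x_t$ from far away. The doubling observation is the crux that removes this danger: it forces successive long-range interferers to sit at geometrically increasing distances and to be certified by geometrically increasing gaps, so that their maximum over all $n$ sensors stays at $O(\log\log n)$, well under the $\sqrt{\log n}$ budget set by the short-range analysis. The remaining work is routine bookkeeping---isolating the single interferer that reaches $x_t$ only through its right gap, handling the boundary indices $1$ and $n$, and folding constant factors into $c$.
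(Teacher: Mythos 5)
Your proposal is correct and follows essentially the same route as the paper's proof: split into left- and right-interference, use Lemma~\ref{squeaker} for the short-range part (interferers $x_i$ with $X_{i-1}\le 1$), and control the long-range part via the same doubling observation that each further interferer forces $X_{i_j-1}\ge x_t-x_{i_j}\ge 2X_{i_{j-1}-1}$. The only divergence is in the final tail estimate for the long-range part --- the paper simply bounds the maximum gap by $\Pr\{\exists i: X_i>2^r\}\le ne^{-2^r}$, whereas you sum an exponential-moment bound over positions --- but both yield the same $O(\log\log n)$ conclusion.
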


\begin{proof}
We consider only left-interference, since the right-interference
can be bounded in a symmetric way.  Consider some element $x_t$.  The
left-interference of $x_t$ is generated by some elements
$x_{i_0},\ldots,x_{i_k}$ where $x_{i_k}<\cdots<
x_{i_0}<x_t$. Lemma~\ref{squeaker} already bounds the number of elements of this
sequence where $X_{i_j-1} \le 1$.  Thus, all that remains is to bound
the number of elements $x_{i_j}$ where $X_{i_j} > 1$.

Observe, as in the proof of Lemma~\ref{squeaker}, that, for any
$j\in\{1,\ldots,k\}$, in order for $x_{i_j}$ to interfere with $x_t$
we must have
\[
   X_{i_j-1} \ge x_t - x_{i_j}
\enspace ,
\]
which implies that $X_{i_j-1} \ge 2 X_{i_{j-1}-1}$ for all
$j\in\{1,\ldots,k\}$.  Therefore, if we have $2^r$ elements with
$X_{i_j}>1$, then we have some element $X_{i_k-1} > 2^r$.  The probability
that a particular $X_i$ is greater than $2^r$ is $e^{-2^r}$. Therefore,
the probability that there exists any $X_i$ greater than $2^r$ is at most
$ne^{-2^r}$.  Setting $r=\log ( d \ln n )$ for a sufficiently large constant
$d>1$ makes this probability at most $n^{1-d}$, and completes the proof.
\end{proof}

\section{Conclusion}

In this paper we have investigated the receiver interference for
a set of random sensors on a line (also known as the highway model)
and proved a tight bound on the value of the expected maximum
interference. An interesting question would be to look
at probability distributions other than uniform for the arrangement of sensors.
Also, bounds for the case of randomly distributed sensors in two dimensions
would be interesting. 
As in the one-dimensional case studied here where
the range of the sensors is assigned as the maximum distance to their two
neighbors, an analysis of the two-dimensional case must be preceded by
an assignment of sensor ranges. A natural choice would be assign each 
sensor a range equal the maximum distance to its neighbors in
the minimum spanning tree of the point set.


\bibliographystyle{plain}
\bibliography{interference}

\end{document}